\theoremstyle{definition}
\begin{document}
\title{Faster Evaluation of Subtraction Games}

\author{David Eppstein}{Computer Science Department, University of California, Irvine}{eppstein@uci.edu}{}{Supported in part by NSF grants  CCF-1618301 and CCF-1616248.}
\authorrunning{David Eppstein}
\Copyright{David Eppstein}

\subjclass{\ccsdesc[500]{Theory of computation~Design and analysis of algorithms}}
\keywords{subtraction games, Sprague--Grundy theory, nim-values}

\maketitle

\begin{abstract}
Subtraction games are played with one or more heaps of tokens, with players taking turns removing from a single heap a number of tokens belonging to a specified \emph{subtraction set}; the last player to move wins. We describe how to compute the set of winning heap sizes in single-heap subtraction games (for an input consisting of the subtraction set and maximum heap size $n$), in time $\tilde O(n)$, where the $\tilde O$ elides logarithmic factors. For multi-heap games, the optimal game play is determined by the \emph{nim-value} of each heap; we describe how to compute the nim-values of all heaps of size up to~$n$ in time $\tilde O(mn)$, where $m$ is the maximum nim-value occurring among these heap sizes.
These time bounds improve naive dynamic programming algorithms with time $O(n|S|)$, because $m\le|S|$ for all such games. We apply these results to the game of subtract-a-square, whose set of winning positions is a maximal square-difference-free set of a type studied in number theory in connection with the Furstenberg--S\'ark\"ozy theorem. We provide experimental evidence that, for this game, the set of winning positions has a density comparable to that of the densest known square-difference-free sets, and has a modular structure related to the known constructions for these dense sets. Additionally, this game's nim-values are (experimentally) significantly smaller than the size of its subtraction set, implying that our algorithm achieves a polynomial speedup over dynamic programming.
\end{abstract}

\section{Introduction}

\emph{Subtraction games} were made famous by the French film \emph{L'Ann\'ee derni\`ere \`a Marienbad} (1961), which showed repeated scenes of two men playing Nim.
A subtraction game is played by two players, with some heaps of game tokens (such as coins, stones, or, in the film, matchsticks) between them. On each turn, a player may take away a number of tokens from a single heap. The tokens removed in each turn are discarded, and play continues until all the tokens are gone. Under the \emph{normal winning convention}, the last player to move is the winner~\cite{BerConGuy-SG-82}. In Nim, any number of tokens may be removed in a turn. This game has a simple analysis according to which it is a winning move to make the bitwise exclusive-or of the binary representations of the heap sizes become zero. If this bitwise exclusive-or is already zero, the player who just moved already has a winning position~\cite{ONAG-11}. However, other subtraction games require the number of removed tokens to belong to a predetermined set of numbers, the \emph{subtraction set} of the game. Different subtraction sets lead to different games with different strategies.\footnote{Golomb~\cite{Gol-JCT-66} has considered an even more general class of games, in which the subtraction set specifies combinations of numbers of tokens that may be simultaneously removed from each pile.}

All subtraction games are \emph{impartial}, meaning that the choice of moves on each turn does not depend on who is making the move. As such, with the normal winning convention, these games can be analyzed by the Sprague--Grundy theory, according to which each heap of tokens in a subtraction game has a \emph{nim-value}, the size of an equivalent heap in the game of Nim~\cite{Spr-Tohoku-35,Gru-Eur-39,ONAG-11}. The optimal play in any such game is to move to make the bitwise exclusive-or of the nim-values zero. The winning positions are the ones in which this bitwise exclusive-or is already zero. Unlike Nim itself, positions with a single nonempty heap of tokens may be winning for the player who just moved; this is true when the nim-value of the heap is zero. The heap sizes whose nim-values are zero are called ``cold'', while the remaining heap sizes are called ``hot''. In a game with a single heap of tokens, it is a winning move to take a number of tokens such that the remaining tokens form a cold position. If the position is already cold, the player who just moved already has a winning position, because the player to move must move to a hot position.

Every finite subtraction set leads to a game with periodic nim-values (depending only on the sizes of the heaps modulo a fixed number). Some natural choices of infinite subtraction set, such as the prime numbers, also do not lead to interesting subtraction games~\cite{Gol-JCT-66}. 
However, a more complicated subtraction game, ``subtract-a-square'', has the square numbers as its subtraction set. That is, on each move, each player  may remove any square number of tokens from any single heap of tokens. The game of subtract-a-square was studied in 1966 by Golomb~\cite{Gol-JCT-66}, who calls it ``remarkably complex''; Golomb credits its invention to Richard A. Epstein.\footnote{No relation.} Its sequence of nim-values,
\[
0, 1, 0, 1, 2, 0, 1, 0, 1, 2, 0, 1, 0, 1, 2, 0, 1, 0, 1, 2, 0, 1, 0, 1, 2, 3, 2, 3, 4, 5, 3, 2, 3, 4, 0,\dots
\]
(\href{https://oeis.org/A014586}{sequence A014586} in the Online Encyclopedia of Integer Sequences, OEIS) displays no obvious patterns.

Subtract-a-square has another reason for interest, beyond investigations related to combinatorial game theory. The set $C$ of cold positions in this game,
\[
0,2,5,7,10,12,15,17,20,22,34,39,44, 52, 57, 62, 65, 67, 72, 85, 95,\dots
\]
(\href{https://oeis.org/A030193}{sequence A030193} in the OEIS) has the property that no two elements of $C$ differ by a square number. A sequence with this property is called a square-difference-free set. The cold positions of subtract-a-square are square-difference-free because, whenever $c$ is a cold position, and $i$ is a positive integer, $c+i^2$ must be hot, as one could win by moving from $c+i^2$ to $c$.
 The square-difference-free sets have been extensively investigated in number theory, following the work of Furstenberg~\cite{Fur-JAM-77} and S\'ark\"ozy~\cite{Sar-AMASH-78}, who showed that they have natural density zero. This means that, for all $\epsilon$, there exists an $N$ such that, for all $n>N$, the fraction of positive integers up to $n$ that belong to the set is at most $\epsilon$.
 
More strongly, the set $C$ of cold positions in subtract-a-square is a maximal square-difference-free set. Every positive integer that is not in $C$ (a hot position) has a move to a cold position, so it could not be added to $C$ without destroying the square-difference-free property. Every maximal square-difference-free subset of the range $[0,n]$ must have size at least $\Omega(\sqrt{n})$ (otherwise there would not be enough sums or differences of set elements and squares to prevent the addition of another number in this range)\footnote{See Golomb~\cite{Gol-JCT-66}, Theorem 4.1.} and size at most
 \[
 O\left(\frac{n}{(\log n)^{\frac{1}{4}\log\log\log\log n}}\right)
 \]
by quantitative forms of the Furstenberg--S\'ark\"ozy theorem~\cite{PinSteSze-JLMS-88}. In particular, these bounds apply to $|C\cap[0,n]|$, the number of cold positions of subtract-a-square up to~$n$.  However, it is not known whether these upper and lower bounds are tight or where the number of cold positions lies with respect to them. In the densest known maximal square-difference-free sets, the number of elements up to $n$ is
 \[
 \Omega\left(n^{(1+\log_{205}12)/2}\right)\approx n^{0.733412}.
 \]
 The construction for these dense sets involves finding a square-difference-free set 
modulo some base~$b$, and selecting the numbers whose base-$b$ representation has these values in its even digit positions and arbitrary values in its odd digit positions~\cite{Ruz-PMH-84}. The bound given in the formula above comes from applying this method to a square-difference-free set of 12 values modulo 205~\cite{BeiGas-08,Lew-EJC-15}. Plausibly, a greater understanding of the nim-values of subtract-a-square could lead to progress in this area of number theory.

Algorithmically, for a subtraction game in which the allowed moves are to take a number of tokens in a given set $S$, the nim-values can be computed by dynamic programming, using the recurrence
\[
\operatorname{nimvalue}(n)=\operatorname*{mex}_{i\in S, i\le n} \operatorname{nimvalue}(n-i).
\]
Here, the ``mex'' operator (short for ``minimum excludent''~\cite{ONAG-11}) returns the smallest non-negative integer that cannot be represented by an expression of the given form. No separate base case is needed, because in the base case (when $n=0$), the set of available moves (numbers in $S$ that are at most $n$) is empty and the mex of an empty set of choices is zero. Evaluating this recurrence, for all heap sizes up to a given threshold $n$,
takes time $O(n|S|)$. The set $C$ of cold positions can be determined within the same time bound,
by applying this recurrence and then returning the set of positions whose nim-value is zero.

However, in the study of algorithms, many naive dynamic programming algorithms turn out to be suboptimal: they can be improved by more sophisticated algorithms for the same problem. Is that the case for this one? We will see that it is. We provide the following two results:
\begin{itemize}
\item We show how to compute the set of cold positions in a given subtraction game, for heaps of size up to a given threshold $n$, in time $\tilde O(n)$.
\item We show how to compute the nim-values of a given subtraction game, 
for heaps of size up to a given threshold $n$, in time $\tilde O(mn)$.
\end{itemize}
In these time bounds, the $\tilde O$ notation elides logarithmic factors in the time bound, and the parameter $m$ refers to the maximum nim-value of any position within the given range.

Ignoring the logarithmic factors hidden in the $\tilde O$ notation, our time bounds are always at least as good as the $O(n|S|)$ time for naive dynamic programming, because for any subtraction game $m\le |S|$ (if there are only $|S|$ possible moves, the mex of their values can be at most $m$).
But are they actually a significant improvement? To answer this, we need to know how quickly $m$ grows compared to the known growth rate of $|S|$.

To determine whether our algorithms provide a speedup for the game of subtract-a-square, we performed a sequence of computational experiments to determine the density of this game's cold positions and the growth rate of its largest nim-values. We find experimentally that, up to a given $n$, the largest nim-value appears to grow as $O(n^{0.35})$, significantly more slowly than the $O(n^{1/2})$ growth rate of the subtraction set. The difference in the growth rates for these quantities shows that our algorithms are indeed an asymptotic improvement by a polynomial factor.
Additionally, the number of cold positions appears to grow at least as quickly as $n^{0.69}$.
That is, the cold positions of this game provide an unexpectedly large square-difference-free set,
competitive with the best theoretical constructions for these sets.
 Examining the modular structure of the set of cold positions, we find that it appears to be similar to the structure of these theoretical constructions, with a square-difference-free set of digit values in even positions and arbitrary values in odd positions. 

\section{Algorithms}

\subsection{Subtraction with hotspots}

In order to evaluate subtraction games efficiently, it will be convenient to generalize them somewhat, to a class of \emph{subtraction games with hotspots}. Given two sets $S$ and $H$ (of positive and non-negative integers respectively), we define a subtraction game with subtraction set $S$ and hotspot set $H$ as follows. The game starts with a single pile of some number of tokens, and the players alternate in choosing a number from $S$ and removing that number of tokens from the pile, as before. However, if any move leaves a pile whose remaining number of tokens belongs to $H$, then the player who made that move immediately loses. (This is not quite the same as allowing the other player to remove all the tokens from piles whose size belongs to $H$, because $H$ might contain the number zero, in which case removing all the tokens could be a losing move instead of a winning move.)

The presence of these hotspots makes defining a nim-value for these games problematic: they are not played by the normal winning convention, so what would happen if we played a game with multiple piles and one player moved to a hotspot? Nevertheless, a recurrence of the usual form suffices to determine the set of hot and cold positions of such a game:
\[
\operatorname{hot}(n)=n\in H \vee
\bigvee\limits_{i\in S, i\le n} \lnot\operatorname{hot}(n-i).
\]

\subsection{Finding the hotspots}

In a subtraction game (with subtraction set $S$, with or without hotspots), suppose that some set $C$ of positions has already been determined to be cold.
Then all positions $H$ that can reach $C$ in a single move are automatically hot.
We can formulate membership in this set of hot positions as a Boolean formula in conjunctive normal form (2-CNF):
\[
(i\in H) \Longleftrightarrow \bigvee\limits_{j+k=i} (j\in C)\wedge (k\in S).
\]

Now suppose that $C$ and $S$ are both represented as bitvectors: arrays of binary values that are $0$ for non-members and $1$ for members of each set.
Then the problem of computing the bitvector representation of~$H$ from the above formula is a standard problem known as \emph{Boolean convolution}, studied for its applications in string matching~\cite{FisPat-CC-74,MutPal-STOC-94,Kal-SODA-02}. It is an instance of a more general class of convolution problems in which we compute
\[
C[i]=\mathop{\oplus}\limits_{j+k=i}A[j]\otimes B[j]
\]
for an ``addition'' operation $\oplus$ and ``multiplication'' operation $\otimes$. In Boolean convolution, $\otimes$ is conjunction ($\wedge$), and $\oplus$ is disjunction ($\vee$).

If the input bitvectors have total length $n$, their Boolean convolution can be computed in $O(n\log n)$ time by replacing their Boolean values with the numbers $0$ and $1$ and computing a numerical convolution (with addition as $\oplus$ and multiplication as $\otimes$) using the fast Fourier transform algorithm.

\subsection{Divide and conquer}

We are now ready to describe our algorithm for finding the hot and cold positions of a subtraction game with hotspots. We assume that we are given as input a range $[x,y)$ of integer values to evaluate (following the Python convention for half-open integer ranges where the bottom delimiter is inside the range and the top delimiter is outside it), together with two sets: the subtraction set $S$ and a set $H$ of predetermined hotspots.

As a base case, if the range has zero or one values in it, we can solve the problem directly: each value in the range is hot or cold accordingly as it belongs or does not belong to $H$, respectively. Otherwise, we perform the following steps:
\begin{enumerate}
\item Find the midpoint $m=(x+y)/2$ of the range, and partition the range into the two subranges
$[x,m)$ and $[m,y)$.
\item Recursively evaluate the lower subrange $[x,m)$, determining its hot and cold positions ($H_x$ and $C_x$ respectively).
\item Use Boolean convolution to find the positions $H_m$ in the upper subrange $[m,y)$
that are hot because they can be reached in a single step from a cold position $C_x$ in the lower subrange.
\item Recursively evaluate the upper subrange $[m,y)$, with hotspot set $H\cup H_m$,
determining its hot and cold positions ($H_y$ and $C_y$ respectively).
\item Return the hot set $H_x\cup H_y$ and cold set $C_x\cup C_y$.
\end{enumerate}

The time for this algorithm can be analyzed using the master method, as is standard for such divide-and-conquer algorithms, giving the following result:

\begin{theorem}
We can determine which positions are hot and which are cold, in a range of $n$ positions of a subtraction game with hotspots, in time $O(n\log^2 n)$.
\end{theorem}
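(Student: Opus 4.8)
The plan is to establish the statement in two parts: correctness of the divide-and-conquer procedure, proved by induction on the size of the range, and then the running-time bound, obtained by setting up the natural recurrence and applying the master method.

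For correctness I would argue that the recursion maintains the invariant that, when invoked on a range $[x,y)$ with hotspot set $H$, it returns exactly the positions of $[x,y)$ that are hot, provided $H$ already contains every position of $[x,y)$ that is forced to be hot by some move into a cold position lying below $x$. The base case $y-x\le 1$ is immediate, since a lone position is hot iff it lies in $H$. For the inductive step, note that a hot position $p\in[x,y)$ must satisfy at least one of three conditions: $p\in H$; $p$ has a move (by some $i\in S$) to a cold position in the lower subrange $[x,m)$; or $p$ has such a move to a cold position in the upper subrange $[m,y)$ — the case of a move to a cold position below $x$ being subsumed into $p\in H$ by the invariant. The first recursive call computes $C_x$ correctly, because by the invariant no additional position of $[x,m)$ is forced hot from below. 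Step~3 applied to $C_x$ and $S$ is precisely the Boolean-convolution characterization $(i\in H)\Longleftrightarrow\bigvee_{j+k=i}(j\in C)\wedge(k\in S)$ of the preceding subsection, restricted to indices $i\in[m,y)$, so $H_m$ is exactly the set of positions in $[m,y)$ meeting the second condition; passing $H\cup H_m$ into the second recursive call re-establishes the invariant for $[m,y)$. Hence $C_y$, $H_y$ are correct, and so are the returned sets $H_x\cup H_y$ and $C_x\cup C_y$.

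For the running time, let $T(n)$ be the time on a range of $n$ positions. The midpoint computation, the unions in step~5, and the bitvector bookkeeping cost $O(n)$; the two recursive calls contribute $2T(n/2)$ (the rounding of the midpoint when $n$ is odd is harmless, and can be made rigorous either by the Akra--Bazzi theorem or by the standard observation that replacing $n/2$ with $\lceil n/2\rceil$ does not change the asymptotics); and step~3 is a Boolean convolution on bitvectors of length $O(n)$, which costs $O(n\log n)$ by the previous subsection. This yields $T(n)=2T(n/2)+O(n\log n)$.

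Solving this recurrence is the only delicate point. With $a=b=2$ we have $n^{\log_b a}=n$, and the driving term $\Theta(n\log n)$ sits exactly on the boundary between the first two cases of the elementary master theorem, in the logarithmic ``gap''. The relevant tool is the generalized form of the master theorem: if $f(n)=\Theta(n^{\log_b a}\log^k n)$ with $k\ge 0$, then $T(n)=\Theta(n^{\log_b a}\log^{k+1} n)$. Applying this with $k=1$ gives $T(n)=\Theta(n\log^2 n)$, hence in particular $T(n)=O(n\log^2 n)$, which is the claimed bound. I expect this boundary case of the master theorem to be the only place requiring any thought; the correctness induction and the cost accounting are routine.
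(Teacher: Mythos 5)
Your proposal is correct and follows the paper's intended argument: the paper itself gives only a one-line appeal to the master method for the recurrence $T(n)=2T(n/2)+O(n\log n)$ arising from the $O(n\log n)$-time Boolean convolution at each level, which is exactly the analysis you carry out (correctly handling the boundary case of the master theorem). Your correctness invariant--that the passed-in hotspot set must already account for all moves into cold positions below the range--is the right one and fills in a step the paper leaves entirely implicit.
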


\subsection{Nim-values}

We can reduce the computation of nim-values in a subtraction game to the computation of hot and cold positions in a subtraction game with hotspots, via the following lemma.

\begin{lemma}
\label{lem:hot-from-cold}
Let $S$ be a subtraction set, and let $H$ be the set of positions in the subtraction game for $S$ that have nim-value at most $t$. Then the positions that have nim-value $t+1$ are exactly the cold positions of the subtraction game with hotspots with subtraction set $S$ and hotspot set $H$.
\end{lemma}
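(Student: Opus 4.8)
The plan is to prove the stated equivalence by strong induction on the position $n$; equivalently, I prove that a position $n$ is \emph{hot} in the hotspot game with subtraction set $S$ and hotspot set $H$ if and only if its nim-value $g(n)$ in the ordinary subtraction game for $S$ satisfies $g(n)\ne t+1$. This induction is well founded because both the nim-value recurrence and the hot/cold recurrence determine the status of $n$ from the statuses of the strictly smaller positions $n-i$ with $i\in S$ — here we use that $S$ consists of positive integers. Note also that $H$ is a fixed set, read off from the ordinary game, so the clause ``$n\in H$'' in the hotspot recurrence is not itself recursive and creates no circularity; for $n=0$ the set of available moves is empty, $g(0)=0\le t$ places $0$ in $H$, and both sides agree that $0$ is hot.

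For the inductive step I would unfold both recurrences. The hotspot recurrence says that $n$ is hot exactly when $n\in H$ — equivalently $g(n)\le t$ — or some move $i\in S$ with $i\le n$ reaches a cold position $n-i$, which by the induction hypothesis means $g(n-i)=t+1$. Hence ``$n$ is hot'' is equivalent to ``$g(n)\le t$, or $t+1$ occurs among the values $g(n-i)$ for $i\in S$, $i\le n$.'' I would then confront this with two elementary facts about $g(n)=\operatorname*{mex}_{i\in S,\,i\le n}g(n-i)$: if $t+1$ is one of the values $g(n-i)$ then the mex of that set cannot equal $t+1$; and if that mex is at least $t+2$ then every integer in $\{0,1,\dots,t+1\}$, in particular $t+1$ itself, must occur among the $g(n-i)$.

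Putting these together gives both implications needed, and since $\{\text{hot},\text{cold}\}$ and $\{g(n)=t+1,\ g(n)\ne t+1\}$ are both partitions of the positions, the two implications already yield the biconditional. If $n$ is hot then either $g(n)\le t$ or $t+1$ occurs among the $g(n-i)$, and in either case $g(n)\ne t+1$. If $n$ is cold then $g(n)>t$ and no $g(n-i)$ equals $t+1$; the former rules out $g(n)\le t$ and the latter rules out $g(n)\ge t+2$, so $g(n)=t+1$. Taking complements, the cold positions of the hotspot game are exactly the positions of nim-value $t+1$, which is the lemma.

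I do not expect a genuinely hard step: once the right invariant — that $n$ is hot precisely when $g(n)\ne t+1$ — is written down, the rest is bookkeeping with the two mex facts. The one point that must not be skipped is the well-foundedness remark above, namely that every hotspot move strictly decreases $n$ and that the hotspot set $H$ is external data rather than something being computed by the same recursion; without it the simultaneous induction over the two games would not be justified.
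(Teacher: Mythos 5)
Your proof is correct and follows essentially the same route as the paper's: both rest on the observation that a position has nim-value $t+1$ exactly when it is not in $H$ (so its nim-value exceeds $t$) and has no move to a position of nim-value $t+1$ (so the mex is not forced above $t+1$), which is the cold condition in the hotspot game. The only difference is presentational: you make explicit the strong induction on $n$ and the two mex facts that the paper's two-sentence proof leaves implicit.
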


\begin{proof}
A position has nim-value $t+1$ if it does not belong to $H$ (else it would have a smaller nim-value) and does not have a move to a smaller position with nim-value $t+1$ (else $t+1$ would not be one of its excluded values). But this is exactly the defining condition for the cold positions of the subtraction game with hotspots.
\end{proof}

\begin{theorem}
In any subtraction game with subtraction set $S$, we can determine the nim-values of the first $n$ positions in time $O(mn\log^2 n)$, where $m$ is the maximum nim-value of any of these positions.
\end{theorem}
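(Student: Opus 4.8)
The plan is to peel off the nim-values one layer at a time, using Lemma~\ref{lem:hot-from-cold} to reduce each layer to a single invocation of our $O(n\log^2 n)$-time algorithm for subtraction games with hotspots. I would maintain a bitvector $H$ over the range $[0,n)$ recording the positions whose nim-value has already been determined, together with an array holding those nim-values. Initially $H$ is empty, matching the empty hotspot set. The loop invariant, entering stage $t=0,1,2,\dots$, is that $H$ holds exactly the positions of nim-value at most $t-1$. At stage $t$ I run the hotspot algorithm on $[0,n)$ with subtraction set $S$ and hotspot set $H$; by Lemma~\ref{lem:hot-from-cold} the cold positions it returns are precisely the positions of nim-value exactly $t$. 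I record the value $t$ for each of them, add them to $H$, and advance to stage $t+1$.

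Correctness follows by induction on $t$. The base case $t=0$ is the plain subtraction game with no hotspots, whose cold positions are by definition the positions of nim-value $0$. For the inductive step, if $H$ equals the set of positions of nim-value $\le t-1$ at the start of stage $t$, then Lemma~\ref{lem:hot-from-cold} identifies the cold positions of the corresponding hotspot game as exactly the positions of nim-value $t$, which restores the invariant with $t$ incremented. One point to check here is that restricting the hotspot algorithm to the range $[0,n)$ loses nothing: the hot/cold recurrence for a position $n'$ refers only to positions $n'-i < n'$ with $i\in S$, all of which lie in $[0,n)$, so the algorithm run on the range computes exactly the values it would compute with all smaller positions available.

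For termination, observe that if a position $n'$ in the range has nim-value $k\ge 1$, then the mex recurrence forces every value $0,1,\dots,k-1$ to occur among the nim-values of the strictly smaller positions $n'-i$, $i\in S$, which all lie in the range; hence the nim-values occurring in $[0,n)$ form the gapless set $\{0,1,\dots,m\}$. Consequently each stage $t\le m$ produces at least one new cold position and stage $m+1$ produces none, so it suffices to stop as soon as every position in $[0,n)$ has been labeled. Since $m$ need not be known in advance, this self-terminating stopping rule (or, when a bound on $m$ is available, an explicit loop up to $m$) is what makes the procedure well defined. The running time is then the sum over the $m+1$ stages of one hotspot evaluation, namely $(m+1)\cdot O(n\log^2 n)=O(mn\log^2 n)$, plus $O(n)$ per stage to update the bitvector $H$ and the nim-value array, which is absorbed into the same bound.

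I expect the main obstacle to be pinning down the induction invariant cleanly — in particular the two small lemmas just noted, that evaluating the hotspot game on the range $[0,n)$ agrees with evaluating it over all positions, and that the set of occurring nim-values has no gaps so that the layer-by-layer process really exhausts the range after $m+1$ rounds. Once those are in hand, the time bound is an immediate telescoping of the hotspot evaluation theorem and requires no further calculation.
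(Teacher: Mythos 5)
Your proposal is correct and follows essentially the same approach as the paper: iterate over nim-values $t=0,1,\dots,m$, using Lemma~\ref{lem:hot-from-cold} to extract each layer via one $O(n\log^2 n)$ run of the hotspot algorithm, stopping once every position is labeled. The paper's proof is a two-sentence version of exactly this; your added details (the loop invariant, the gaplessness of the occurring nim-values, and the check that restricting to $[0,n)$ is harmless) are correct elaborations rather than a different route.
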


\begin{proof}
We loop over the range of nim-values from $0$ to $s$, using
\autoref{lem:hot-from-cold} to compute the set of positions having each successive nim-value in time $O(n\log^2 n)$ per nim-value.  The loop terminates when all of the first $n$ positions have been assigned a nim-value.
\end{proof}

The maximum nim-value of a subtraction game is $|S|$, so (except for the logarithmic factors) this time bound compares favorably with a naive $O(n|S|)$ dynamic programming algorithm for computing the nim-values of each position by finding the minimum excluded value among the other positions reachable from it.

\section{Experiments}

To compare the performance of our Boolean convolution based evaluation algorithms to naive algorithms for subtract-a-square, we performed some computational experiments, which we describe here.

\subsection{Maximum nim-value}

\begin{figure}[t]
\centering\includegraphics[width=0.85\textwidth]{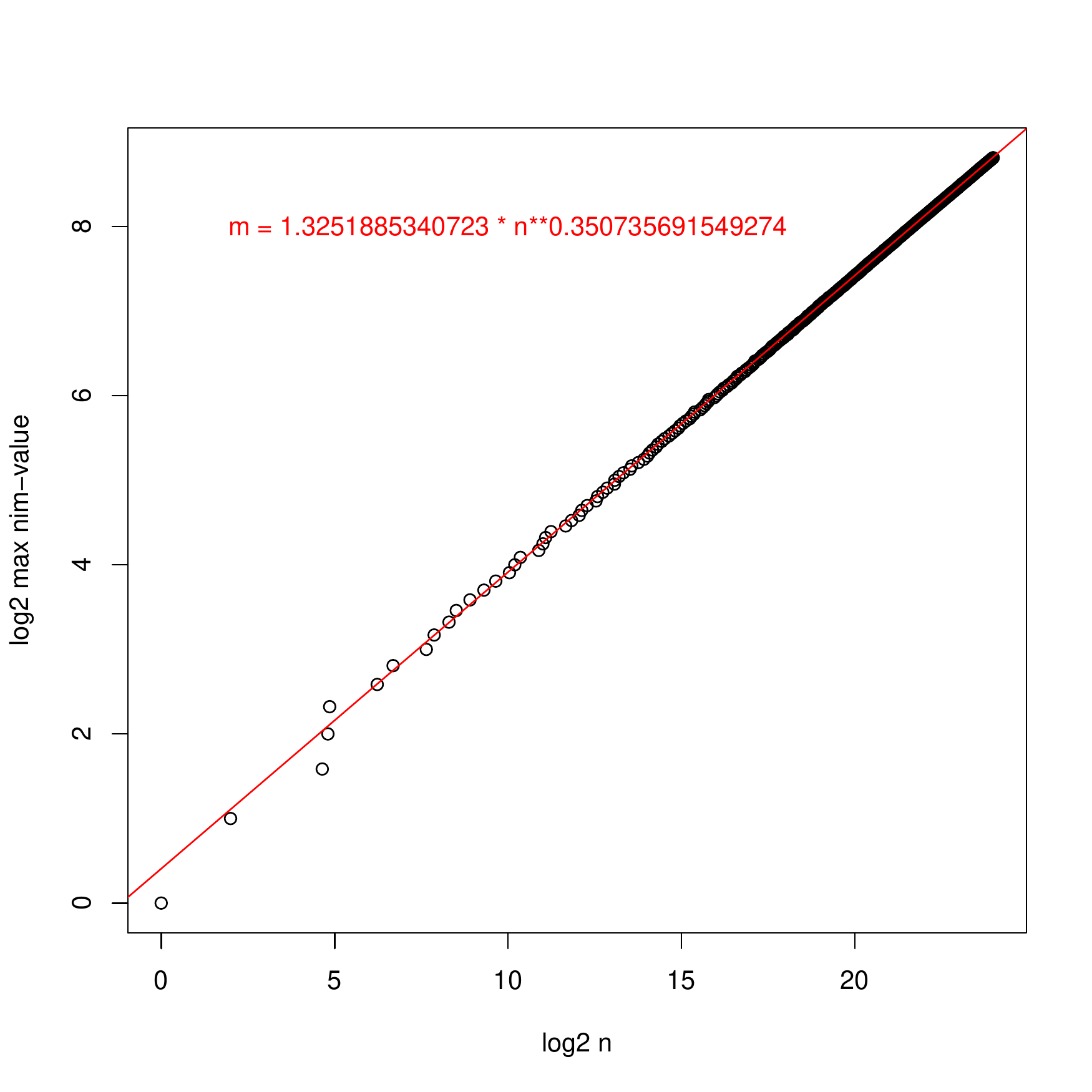}
\caption{The maximum nim-values $m$ seen among the first $n$ positions in subtract-a-square.}
\label{fig:sas}
\end{figure}

\autoref{fig:sas} plots (on a doubly logarithmic scale) the maximum nim-values $m$ seen among the first $n$ positions in subtract-a-square. Only the positions where a new maximum is attained are included in the plot.

We fitted a function of the form $cn^e$ (a monomial with constant coefficient $c$ and exponent $e$) to these points, by using Siegel's repeated median estimator~\cite{Sie-BM-82}, a form of robust statistical regression that is insensitive to outliers (as would be expected to occur in the lower left parts of the plot).
This estimator fits a line through a sample of points by, for each point, computing the median of the slopes formed by it and the other points, and then choosing the slope of the fit line to be the median of these medians. It similarly chooses the height of the fit line so that it passes above and below an equal number of points. We applied this to the points on our log-log plot, using the mblm library of the R statistical package, which implements this estimator, and then transformed the fit line back to a monomial over the original coordinates of the data points. The result is shown in red in the figure.

As the figure shows, the maximum nim-value $m$ among the first $n$ positions of subtract-a-square is accurately estimated by a function of the form $O(n^{0.351})$, well below the $O(n^{0.5})$ size of the subtraction set for this game. Therefore, we would expect our $O(mn\log^2 n)$-time convolution-based algorithm for computing the nim-values of this game to be asymptotically faster than the $O(n^{3/2})$ time for dynamic programming. However, even if we ignore the different constant factors in the running times of these two algorithms, $n$ needs to be approximately $10^{26}$ in order for $n^{1.35}\log_2^2n$ to be smaller than $n^{1.5}$, so we would not expect this speedup to be applicable to practically relevant ranges of $n$. Because of the simplicity and relative efficiency of the dynamic programming algorithm for small $n$, the values in \autoref{fig:sas} (for $n$ up to $2^{24}$) were computed by dynamic programming rather than convolution.

\subsection{Number of cold positions}

\begin{figure}[t]
\centering\includegraphics[width=0.85\textwidth]{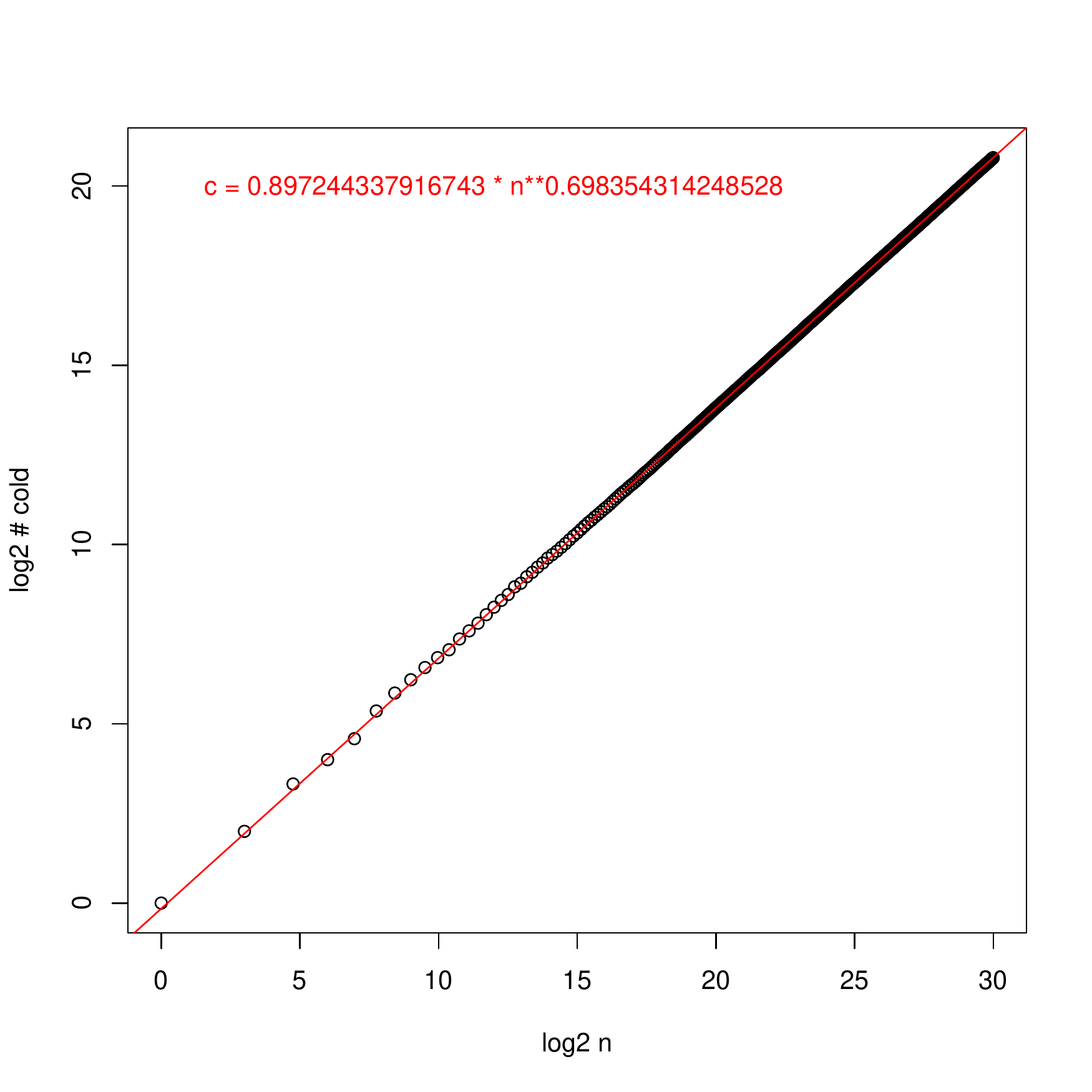}
\caption{The number of cold positions among the first $n$ positions in subtract-a-square.}
\label{fig:density}
\end{figure}

Our next experiment measures the number of cold positions among the first $n$ positions in subtract-a-square (\autoref{fig:density}). In order to provide more data points in the lower left part of the log-log plot than would be visible if we used uniform sampling of the range of values of~$n$,
the plot of \autoref{fig:density} shows the number of cold positions for each value of~$n$ that is a perfect cube (that is, for the values $1,8,27,64,\dots$) up to $2^{30}$.
As in the previous experiment, we fitted a monomial function to these points using Siegel's repeated median estimator.

The number of cold positions does not directly affect the time bound for our convolution-based algorithm. However, it does affect the time for a different algorithm, for computing the set of cold positions (but not their nim-values) directly, in any subtraction game. This algorithm is analogous to the sieve of Eratosthenes, which finds prime numbers iteratively, for each one marking off the numbers that are not prime. To compute the cold positions among the first $n$ positions, it performs the following steps:
\begin{enumerate}
\item Initialize a Boolean array $H$ of length $n$ (indicating whether each position is hot) to be false in each cell.
\item For each position $i$ from $0$ to $n$, test whether $H[i]$ is still false. If it is, perform the following steps:
\begin{enumerate}
\item Output $i$ as one of the cold positions.
\item For each value $s$ in the subtraction set $S$, mark $i+s$ as hot by setting $H[i+s]$ to be true.
\end{enumerate}
\end{enumerate}
If the set of cold positions up to $n$ is $C$, and the subtraction set is $S$, then this sieving algorithm takes time $O(|C|\cdot|S|)$.

In some subtraction games, $C$ could be as small as $n/|S|$, in which case the sieving algorithm would take linear time. However, our experiments show that, for subtract-a-square, $C$ appears to grow more like $n^{0.7}$, giving the sieving algorithm a running time of approximately $n^{1.2}$, compared to the $O(n\log^2 n)$ time bound of the convolution based algorithm. Again ignoring the constant factors in the time bounds, $n$ would need to be approximately $10^{18}$ for the convolution-based algorithm to be faster than the sieving algorithm. Because it is simple to code and fast for smaller values of~$n$, the results in \autoref{fig:density} were calculated using the sieving algorithm.

\subsection{Modularity}

\begin{figure}[t]
\centering\includegraphics[width=0.85\textwidth]{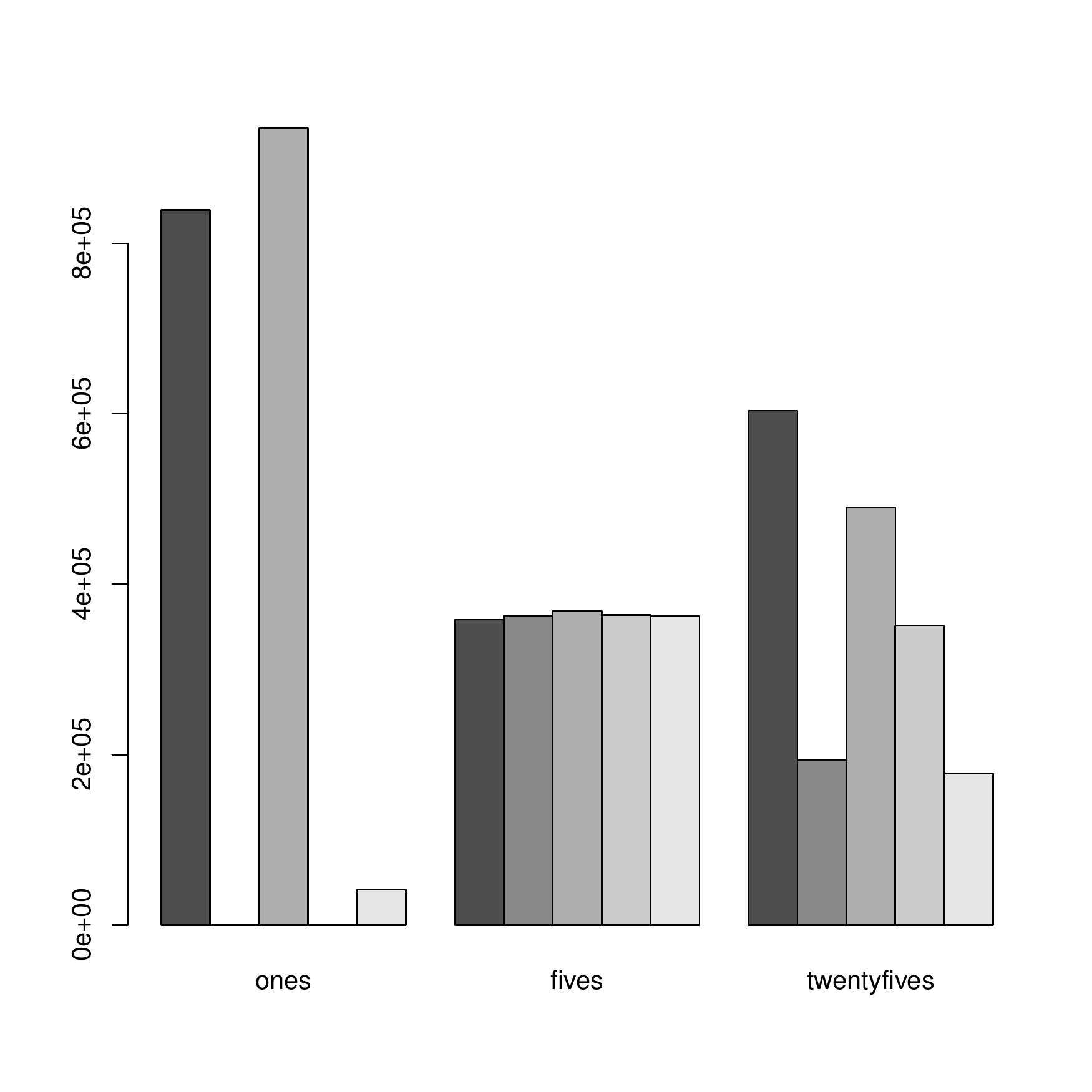}
\caption{The distribution of digit values among the three low-order base-5 digits of cold positions
(for $n<2^{30}$) in subtract-a-square.}
\label{fig:mod5}
\end{figure}

\begin{figure}[t]
\centering\includegraphics[width=0.85\textwidth]{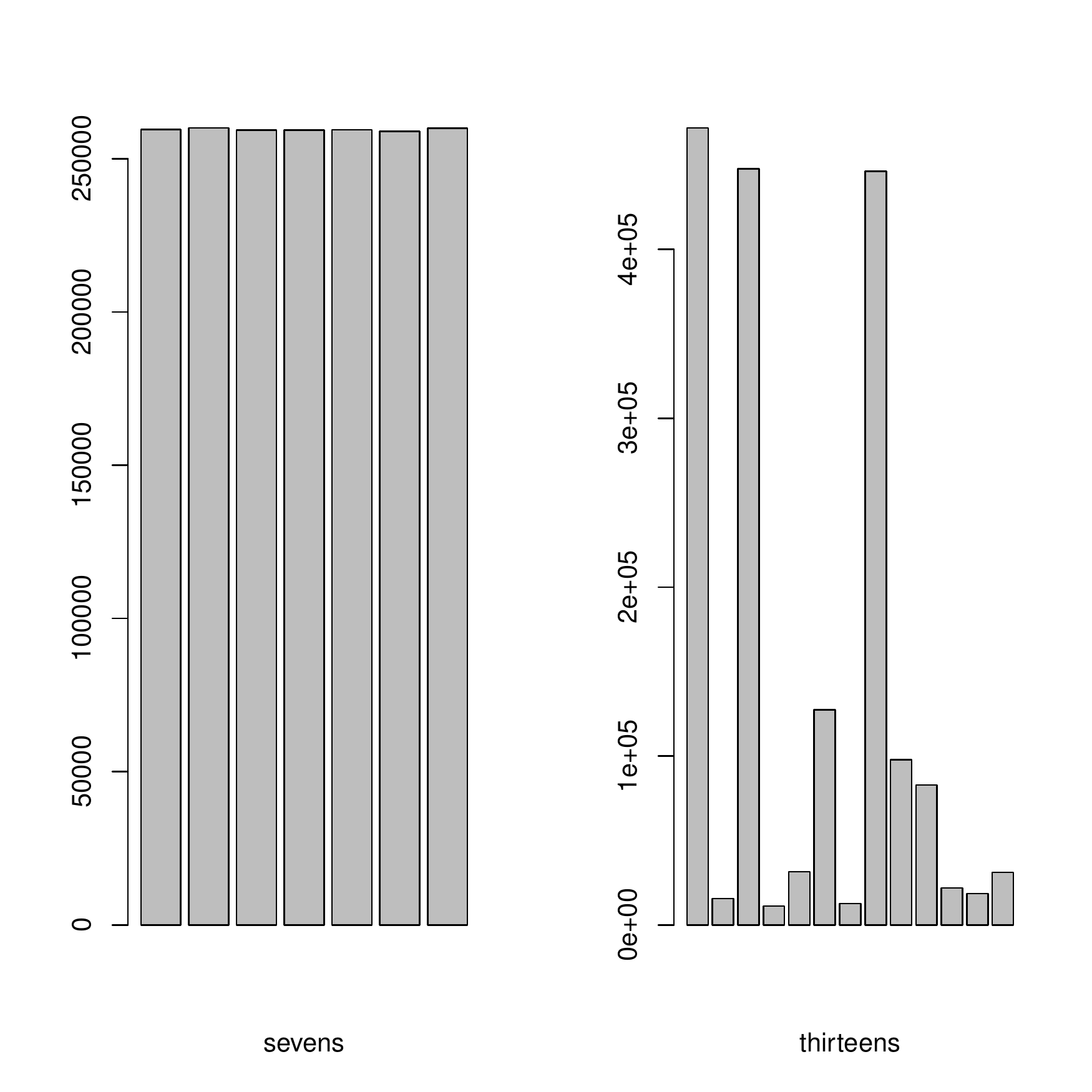}
\caption{The distribution of digit values among the low-order base-7 and base-13 digits of cold positions
(for $n<2^{30}$) in subtract-a-square.}
\label{fig:moremods}
\end{figure}

The high density of cold positions in subtract-a-square is surprising, especially in view of earlier conjectures in the theory of square-difference-free sets that the number of values up to $n$ in such a set could be at most $n^{1/2+o(1)}$~\cite{Sar-AUSB-78}. These conjectures were disproven by finding sets of numbers of a special form: numbers whose radix-$b$ representation, for a carefully chosen base $b$, use only base-$b$ digits from a square-difference-free set (modulo~$b$) in their even digit positions~\cite{Ruz-PMH-84,BeiGas-08,Lew-EJC-15}. Although the cold positions of subtract-a-square have somewhat lower density than these constructions, they arise more naturally, and it is of interest to investigate their modular structure and compare it to the structure of these other known dense square-difference-free sets.

The idea of considering the base-$b$ structure of these positions, for different choices of the base~$b$, also arises from the consideration of a different subtraction game, described by Golomb~\cite{Gol-JCT-66}. This game has as its subtraction set the Moser--de Bruijn sequence
\[
0, 1, 4, 5, 16, 17, 20, 21, 64, 65, 68, 69,\dots
\]
of numbers that are sums of distinct powers of four. That is, when written in base~4, the numbers of the subtraction set have only 0 and 1 as their base-4 digits. The nim-value of any position $n$ may be obtained by writing $n$ in base~4, taking each digit modulo~2 (reducing it to 0 or 1), and then reinterpreting the resulting string of 0's and 1's as a binary number. Because of this simple formula for its nim-values, the Moser--de Bruijn subtraction game has both a maximum nim-value and a number of cold positions (among the first~$n$ positions) proportional to $\Theta(\sqrt n)$. It subtraction set size, also $\Theta(\sqrt n)$, is comparable to that for subtract-a-square.
 In particular, for this game, convolution is neither asymptotically faster than dynamic programming nor than sieving, although all of these algorithms can be improved by using the formula instead. What makes subtract-a-square so different from the Moser--de Bruijn subtraction game?

To approach these questions, we performed more computational experiments studying the distribution of digit values for the cold positions in subtract-a-square, for various bases. This study follows the earlier work of Golomb~\cite{Gol-JCT-66}, who observed that the low-order base-5 digits of the cold positions among the first first 20,000 game positions were highly non-uniformly distributed, and of Bush~\cite{Bus-sm-92}, who extended this study to the first 40,000,000 game positions. \autoref{fig:mod5} shows an extension of this study to the first $2^{30}$ game positions,
and to the three low-order base-5 digits of each cold position. As the figure shows, with a few exceptions, the ones digit of the cold positions lies within the square-difference-free set $\{0,2\}\pmod{5}$. The fives digit shows no significant non-uniformities, but the twentyfives digit is quite non-uniformly distributed, and is possibly heading towards the same square-difference-free set $\{0,2\}\pmod{5}$. In this way, the cold positions of subtract-a-square appear to be emulating the strategy of the known dense square-difference-free sets~\cite{Ruz-PMH-84,BeiGas-08,Lew-EJC-15} of having a square-difference-free set of digits in even digit positions and all possible digits in odd positions modulo a base~$b$. In this case $b=5$, and following this strategy perfectly for $b=5$ would lead to a set of size $n^{\log_{10} 25}\approx n^{0.71534}$. The slightly slower growth rate of the cold positions in subtract-a-square can be explained by the slow convergence of its higher-order base-5 digits to square-difference-free sets of digits.

What about other bases? \autoref{fig:moremods} shows the results of the same experiment (for the low-order digits only) for base~7 and base~13. Because 7 is 3 modulo 4, there are no nontrivial square-difference-free sets modulo 7: every two numbers modulo~7  differ by a square (mod~7).
Perhaps because of this, the digit values in base 7 show no significant nonuniformities. However, modulo 13, the squares are $0$, $\pm 1$, $\pm 3$, and $\pm 4$. Because 13 is 1 modulo~4, each of the nonzero squares occurs four times among the squares of values mod~13; for instance, $\pm 3$ is the square of $4$, $6$, $7$,  and $9$ (mod~13). As the figure shows, the low-order digits of the base-13 representations of the cold positions in subtract-a-square appear to be converging towards the square-difference-free set $\{0,2,7\}$ (mod~13). Perhaps subtract-a-square implements the modular strategy for finding dense square-difference-free sets in all prime bases (congruent to 1~mod~4) simultaneously?

\section{Conclusions}

We have developed new convolution-based methods for evaluating arbitrary subtraction games (either to determine the set of cold positions or to evaluate the nim-value of each position).
Our experiments on the subtract-a-square game show that its maximum nim-value is lower than the theoretical value for games with subtraction sets of the same size, and its number of cold positions is higher than the theoretical value. These results show that, asymptotically, our new algorithms are faster than alternative dynamic programming or sieving approaches for the same problems on this game. However, the breakeven point for the new algorithms is high enough that our convolution-based approach is not yet practical. It would be of interest to develop improved algorithms that are both asymptotically faster and more practical than existing approaches.

In an attempt to investigate why the cold positions of subtract-a-square produce a dense square-difference-free set, we investigated the base-$b$ representations of the cold positions for several small prime choices of $b$. Our tests found significant irregularities in the even positions of these base-$b$ representations, when $b$ is congruent to 1 mod 4. We leave the problem of finding a theoretical explanation for these patterns, and for the density of the cold positions in subtract-a-square, as open for future research.

\bibliographystyle{abuser}
\bibliography{subtraction}

\begin{thebibliography}{10}
\urlstyle{rm}

\bibitem{BeiGas-08}
R.~Beigel and W.~Gasarch.
\newblock {Square-difference-free sets of size $\Omega(n^{0.7334\dots})$}.
\newblock Electronic preprint arxiv:0804.4892, 2008.

\bibitem{BerConGuy-SG-82}
E.~R. Berlekamp, J.~H. Conway, and R.~K. Guy.
\newblock {Subtraction games}.
\newblock {\em Winning Ways for your Mathematical Plays, Vol. I: Games in
  General}, pp.~83{--}86. Addison-Wesley, 1982.

\bibitem{Bus-sm-92}
D.~Bush.
\newblock {The uniqueness of 11,356}.
\newblock sci.math usenet newsgroup, October 12 1992,
  \url{https://www.ics.uci.edu/~eppstein/cgt/subsquare.html}.

\bibitem{ONAG-11}
J.~H. Conway.
\newblock {Chapter 11: Impartial Games and the Game of Nim}.
\newblock {\em On Numbers and Games}, pp.~122{--}135. Academic Press, 1976.

\bibitem{FisPat-CC-74}
M.~J. Fischer and M.~S. Paterson.
\newblock {String-matching and other products}.
\newblock {\em Complexity of computation (Proc. SIAM-AMS Appl. Math. Sympos.,
  New York, 1973)}, pp.~113{--}125. American Mathematical Society, SIAM-AMS
  Proceedings~7, 1974,
  \href{https://www.ams.org/mathscinet-getitem?mr=0400782}%
{MR0400782}.

\bibitem{Fur-JAM-77}
H.~Furstenberg.
\newblock {Ergodic behavior of diagonal measures and a theorem of Szemer{\'e}di
  on arithmetic progressions}.
\newblock {\em Journal d'Analyse Math{\'e}matique} 31:204{--}256, 1977,
  \href{http://dx.doi.org/10.1007/BF02813304}%
{doi:\nolinkurl{10.1007/BF02813304}},
  \href{https://www.ams.org/mathscinet-getitem?mr=0498471}%
{MR0498471}.

\bibitem{Gol-JCT-66}
S.~W. Golomb.
\newblock {A mathematical investigation of games of {``}take-away{''}}.
\newblock {\em Journal of Combinatorial Theory} 1:443{--}458, 1966,
  \href{http://dx.doi.org/10.1016/S0021-9800(66)80016-9}%
{doi:\nolinkurl{10.1016/S0021-9800(66)80016-9}},
  \href{https://www.ams.org/mathscinet-getitem?mr=0209015}%
{MR0209015}.

\bibitem{Gru-Eur-39}
P.~M. Grundy.
\newblock {Mathematics and games}.
\newblock {\em Eureka} 2:6{--}8, 1939.

\bibitem{Kal-SODA-02}
A.~Kalai.
\newblock {Efficient pattern-matching with don't cares}.
\newblock {\em Proceedings of the 13th Annual ACM-SIAM Symposium on Discrete
  Algorithms (SODA '02)}, pp.~655{--}656. Society for Industrial and Applied
  Mathematics, 2002, \url{https://dl.acm.org/citation.cfm?id=545381.545468}.

\bibitem{Lew-EJC-15}
M.~Lewko.
\newblock {An improved lower bound related to the Furstenberg-S{\'a}rk{\"o}zy
  theorem}.
\newblock {\em Electronic Journal of Combinatorics} 22(1):P1.32, 2015,
  \url{https://www.combinatorics.org/ojs/index.php/eljc/article/view/v22i1p32},
  \href{https://www.ams.org/mathscinet-getitem?mr=3315474}%
{MR3315474}.

\bibitem{MutPal-STOC-94}
S.~Muthukrishnan and K.~Palem.
\newblock {Non-standard stringology: Algorithms and complexity}.
\newblock {\em Proceedings of the 26th Annual ACM Symposium on Theory of
  Computing (STOC '94)}, pp.~770{--}779. ACM, 1994,
  \href{http://dx.doi.org/10.1145/195058.195457}%
{doi:\nolinkurl{10.1145/195058.195457}}.

\bibitem{PinSteSze-JLMS-88}
J.~Pintz, W.~L. Steiger, and E.~Szemer{\'e}di.
\newblock {On sets of natural numbers whose difference set contains no
  squares}.
\newblock {\em Journal of the London Mathematical Society (2nd Series)}
  37(2):219{--}231, 1988, \href{http://dx.doi.org/10.1112/jlms/s2-37.2.219}%
{doi:\nolinkurl{10.1112/jlms/s2-37.2.219}},
  \href{https://www.ams.org/mathscinet-getitem?mr=928519}%
{MR928519}.

\bibitem{Ruz-PMH-84}
I.~Z. Ruzsa.
\newblock {Difference sets without squares}.
\newblock {\em Periodica Mathematica Hungarica} 15(3):205{--}209, 1984,
  \href{http://dx.doi.org/10.1007/BF02454169}%
{doi:\nolinkurl{10.1007/BF02454169}},
  \href{https://www.ams.org/mathscinet-getitem?mr=756185}%
{MR756185}.

\bibitem{Sar-AMASH-78}
A.~S{\'a}rk{\H{o}}zy.
\newblock {On difference sets of sequences of integers. I}.
\newblock {\em Acta Mathematica Academiae Scientiarum Hungaricae}
  31(1{--}2):125{--}149, 1978, \href{http://dx.doi.org/10.1007/BF01896079}%
{doi:\nolinkurl{10.1007/BF01896079}},
  \url{https://www.renyi.hu/~p_erdos/1978-19a.pdf},
  \href{https://www.ams.org/mathscinet-getitem?mr=0466059}%
{MR0466059}.

\bibitem{Sar-AUSB-78}
A.~S{\'a}rk{\"o}zy.
\newblock {On difference sets of sequences of integers. II}.
\newblock {\em Annales Universitatis Scientiarum Budapestinensis de Rolando
  E{\"o}tv{\"o}s Nominatae} 21:45{--}53, 1978,
  \href{https://www.ams.org/mathscinet-getitem?mr=536201}%
{MR536201}.

\bibitem{Sie-BM-82}
A.~F. Siegel.
\newblock {Robust regression using repeated medians}.
\newblock {\em Biometrika} 69(1):242{--}244, 1982,
  \href{http://dx.doi.org/10.1093/biomet/69.1.242}%
{doi:\nolinkurl{10.1093/biomet/69.1.242}}.

\bibitem{Spr-Tohoku-35}
R.~P. Sprague.
\newblock {{\"U}ber mathematische Kampfspiele}.
\newblock {\em Tohoku Math. J} 41:438{--}444, 1935,
  \url{https://www.jstage.jst.go.jp/article/tmj1911/41/0/41_0_438/_article}.

\end{thebibliography}
\end{document}